\documentclass[aps,prl,reprint,superscriptaddress,nofootinbib,nobibnotes]{revtex4-2}

\usepackage{amsmath,amssymb,amsthm}
\usepackage{graphicx}
\usepackage[colorlinks=true,allcolors=blue]{hyperref}

\newtheorem{theorem}{Theorem}

\newtheorem{lemma}{Lemma}
\newtheorem{corollary}{Corollary}
\newtheorem{definition}{Definition}

\newcommand{\Hmat}{\mathbf{H}}
\newcommand{\Smat}{\mathbf{S}}
\newcommand{\tHmat}{\widetilde{\mathbf{H}}}
\newcommand{\tSmat}{\widetilde{\mathbf{S}}}
\newcommand{\dHmat}{\Delta\mathbf{H}}
\newcommand{\dSmat}{\Delta\mathbf{S}}
\newcommand{\Nshots}{N_{\rm shots}}

\newcommand{\mri}{\mathrm{i}}
\newcommand{\vareps}{\varepsilon}

\begin{document}

\title{Improved Measurement Cost Scaling in the Nonorthogonal Quantum Eigensolver}

\author{Mingyu Kang}
\affiliation{Berkeley Center for Quantum Information and Computation, Berkeley, California 94720, USA}
\affiliation{Department of Chemistry, University of California, Berkeley, California 94720, USA}
\affiliation{Challenge Institute for Quantum Computation, Berkeley, California 94720, USA}
\author{K. Birgitta Whaley}
\affiliation{Berkeley Center for Quantum Information and Computation, Berkeley, California 94720, USA}
\affiliation{Department of Chemistry, University of California, Berkeley, California 94720, USA}
\affiliation{Challenge Institute for Quantum Computation, Berkeley, California 94720, USA}

\date{\today}

\begin{abstract}
Quantum subspace diagonalization methods are promising algorithms for quantum
chemistry on near-term quantum computers. These methods can estimate low-lying
energies of molecular systems using shallow quantum circuits, at the cost of
many circuit repetitions to estimate the projected matrix elements. Errors in
these matrix elements can be converted into much larger eigenvalue errors by an
ill-conditioned overlap matrix. We study this bottleneck for the nonorthogonal
quantum eigensolver (NOQE), which constructs a compact multireference subspace
from dressed unrestricted Hartree-Fock states.
We prove a finite-shot perturbation bound showing that, after overlap
thresholding, the eigenvalue sensitivity is controlled by the condition number
of the retained overlap matrix rather than by a worst-case dimension factor.
With a scalable thresholding scheme, the upper bound on the per-matrix-element shot count 
required to reach a target accuracy scales as $\mathcal{O}(M)$, improving on the 
previously known $\mathcal{O}(M^3)$ bound, where $M$ is the number of reference states. 
Numerical experiments on hydrogen chains and rings suggest that, in practice,
the measurement cost of structured NOQE instances can grow even more slowly
than this linear bound.
\end{abstract}

\maketitle

Estimating low-lying energies of molecular Hamiltonians is one of the most widely
anticipated applications of quantum computers~\cite{mcardle2020quantum}.
Quantum phase estimation~\cite{kitaev1995quantum,abrams1999quantum} approaches provide a direct route to this goal, but
require large fault-tolerant quantum computers~\cite{vonburg2021quantum,lee2021even,low2025fast}. Quantum
subspace diagonalization methods are a promising alternative for near-term and
early fault-tolerant devices, as they use much shallower circuits than the long
coherent circuits required to prepare an eigenstate directly.

A quantum subspace diagonalization method defines a collection of trial states,
uses shallow circuits to estimate the Hamiltonian and overlap matrix elements
between them, and solves the resulting generalized eigenvalue problem on a
classical computer. Representative ways of constructing the subspace include
quantum subspace expansion~\cite{mcclean2017hybrid,colless2018computation,patel2026quantum},
quantum Krylov methods~\cite{stair2020multireference,cortes2022fast,shen2023real},
quantum power methods~\cite{seki2021quantum}, quantum filter diagonalization~\cite{parrish2019quantum,bespalova2021hamiltonian}, and the tensor network quantum eigensolver~\cite{leimkuhler2025quantum,leimkuhler2025exponential}.
The use of shallow circuits comes at the cost of a measurement
problem: each entry of the projected Hamiltonian and overlap matrices is
estimated from repeated measurements. Small errors in individual matrix
elements can produce much larger errors in the generalized eigenvalues after
the classical diagonalization step. Existing perturbation analyses show 
that this element-to-eigenvalue error conversion can be severe
when the overlap matrix is ill-conditioned~\cite{mathias2004definite,epperly2022theory,lee2024sampling,kirby2024analysis}.
Thus, one may need very many shots to estimate the matrix elements accurately enough
that the final eigenvalue error remains below a target tolerance.

The nonorthogonal quantum eigensolver (NOQE) is a leading quantum subspace
diagonalization method for quantum chemistry~\cite{baek2023say}. NOQE is promising as its
reference states are designed to capture both static and dynamic correlations. Allowing the references to
be nonorthogonal gives the design freedom to use unrestricted Hartree-Fock
(UHF) states, which can improve the accuracy of ground-state energy estimates.
At the same time, large overlaps between reference states can make the overlap
matrix ill-conditioned and blow up the measurement cost. We show theoretical
and numerical evidence that this need not be an issue for NOQE when
thresholding is done properly.

This work develops a perturbation-theory framework for the finite-shot
stability of thresholded NOQE. We first review the NOQE algorithm and show how
finite-shot estimates of the Hamiltonian and overlap matrices define a
perturbed generalized eigenvalue problem. We then prove that the eigenvalue sensitivity is controlled
by the condition number of the thresholded overlap matrix, rather than by a
worst-case dimension factor. This reduces the sufficient per-matrix-element
shot count from the previously known $\mathcal{O}(M^3)$ scaling~\cite{lee2024sampling,ren2026error} to
$\mathcal{O}(M)$, where $M$ is the number of reference states, under a scalable
thresholding scheme. Finally, hydrogen-chain
and hydrogen-ring numerics support this reduction by showing that the
condition number of the thresholded overlap matrix remains controlled, as well as 
suggest that the measurement cost scales even more favorably in practice.

\emph{NOQE algorithm.} ---
NOQE~\cite{baek2023say} constructs a compact multireference ansatz in the subspace spanned by $M$ dressed
nonorthogonal states, hereafter referred to as \textit{reference states}. 
The starting point for the $J$th state is a UHF solution
$\lvert \Phi_J^{\rm UHF}\rangle$, obtained as a distinct stationary solution of
the self-consistent-field equations by optimizing from a particular localized 
electronic configuration as the initial guess. The NOQE reference state
$\lvert \phi_J\rangle$ is then constructed by applying a shallow correlating
operator to this UHF state,
\[
    \lvert \phi_J\rangle =
    \hat{U}_J \lvert \Phi_J^{\rm UHF}\rangle .
\]
In this construction, the UHF states capture static correlation, while the
dressing operators introduce dynamic correlation.
The dressing operator $\hat{U}_J$ is chosen from compact unitary
coupled-cluster-like~\cite{romero2018strategies,lee2019generalized} or
cluster-Jastrow~\cite{matsuzawa2020jastrow,tkachenko2025beyond} correlators, with parameters
supplied classically rather than optimized on the quantum device.
NOQE approximates a low-lying eigenstate within the span of these reference
states,
\[
    \lvert \Psi\rangle =
    \sum_{J=1}^M c_J \lvert \phi_J\rangle ,
\]
where the coefficients $c_J$ are determined after projecting the Hamiltonian
and overlap operators onto this multireference subspace and diagonalizing the
resulting generalized eigenvalue problem. 
Specifically, the projected Hamiltonian and overlap matrices are
\[
    H_{IJ}=\langle \phi_I|\hat{H}|\phi_J\rangle
    \qquad \text{and} \qquad
    S_{IJ}=\langle \phi_I|\phi_J\rangle,
\]
such that the coefficients and energies satisfy 
\begin{equation}
    \Hmat \vec{c} = E \Smat \vec{c}
    \label{eq:noqe-gevp}
\end{equation}
This generalized eigenvalue problem is solved on a classical computer, 
which is feasible as the $M$-dimensional
diagonalization is small compared with representing the Hamiltonian in the full
$2^{N_{\rm SO}}$-dimensional Fock space, where $N_{\rm SO}$ is the number of
spin orbitals. 

The numerical stability of Eq.~\eqref{eq:noqe-gevp}
depends on the conditioning of the overlap matrix. If $\Smat$ is nearly
singular, thresholding needs to be performed to remove directions with small
overlap-matrix eigenvalues before the generalized eigenvalue problem is solved.

The quantum computer is used to estimate the matrix elements, where the off-diagonal ones are
classically expensive for dressed nonorthogonal UHF states. The matrix elements are
measured using a modified-Hadamard-test protocol~\cite{huggins2020nonorthogonal}
that uses two $N_{\rm SO}$-qubit registers and one control qubit. A variant with a single 
$N_{\rm SO}$-qubit system register and one control qubit is also possible, 
at the cost of approximately doubling the circuit depth~\cite{baek2023say}.
As different UHF states are naturally expressed in different molecular-orbital
bases, these circuits include orbital-basis rotations that map the references
to a common qubit-basis representation~\cite{baek2023say}.

Finite circuit repetitions produce noisy estimates of the matrix elements,
which perturb the generalized eigenvalue problem. The central measurement-cost
question is how many shots per matrix element are required to keep the induced
eigenvalue error below a target tolerance.

\emph{Theory.} ---
Let $\Hmat$ and $\Smat$ denote the exact projected Hamiltonian and overlap
matrices. We write the errors from finite-shot measurement as $\dHmat$ and
$\dSmat$, such that the measured projected matrices are
\begin{equation}
    \tHmat=\Hmat+\dHmat,\qquad
    \tSmat=\Smat+\dSmat .
\end{equation}

We denote the size of the perturbation by
\begin{equation}
    \chi=\sqrt{\|\dHmat\|^2+\|\dSmat\|^2}.
    \label{eq:chi-def}
\end{equation}
Assuming unbiased finite-shot measurement, $\dHmat$ and $\dSmat$ are modeled as
$M\times M$ random matrices where each element is drawn from zero-mean Gaussian 
distribution with variance $\sigma^2=\mathcal{O}(1/\Nshots)$, where $\Nshots$ denotes the number of circuit repetitions, or shots, used per matrix
element. 
Then, random-matrix concentration gives, with high probability~\cite{vershynin2012nonasymptotic},
\begin{equation}
    \chi
    =
    \mathcal{O}\!\left(\sqrt{{M} / \Nshots}\right).
    \label{eq:chi-rmt-scaling}
\end{equation}
We note that there is a constant suppressed in the asymptotic scaling bound of Eq.~\eqref{eq:chi-rmt-scaling} that is related to the variance of the Hamiltonian. The precise form and value of this depend on the
Pauli decomposition of the Hamiltonian, $\hat{H}=\sum_l w_l \hat{P}_l$, and on
the measurement strategy, e.g., employing a Pauli-word
grouping~\cite{verteletskyi2020measurement,yen2023deterministic,shlosberg2023adaptive} or a basis-rotation grouping~\cite{huggins2021efficient}. This can
introduce additional system-size-dependent measurement overheads on the variance $\sigma^2$. In this work, we focus
on the scaling with subspace dimension $M$ and do not model these
Hamiltonian- and grouping-dependent prefactors.
We also note that iterative quantum amplitude estimation can achieve the
quadratically improved Heisenberg scaling
$\sigma^2=\mathcal{O}(1/\Nshots^2)$ when $\Nshots$ counts coherent oracle
queries rather than independent projective
measurements~\cite{grinko2021iterative,ren2026towards}.

Mathias and Li~\cite{mathias2004definite} turn the generalized eigenvalue problem into a
geometric problem in a two-dimensional plane~\cite{epperly2022theory,lee2024sampling}.
Let $E_j$ denote the exact projected generalized eigenvalues, ordered in
increasing value. The equation
$\Hmat \vec{c}_j=E_j\Smat\vec{c}_j$ can be written as
$\beta_j\Hmat\vec{c}_j=\alpha_j\Smat\vec{c}_j$, with
$E_j=\alpha_j/\beta_j$, such that each generalized eigenvalue is represented by a point
$(\alpha_j,\beta_j)$. The corresponding eigenangle is
\[
    \theta_j=\tan^{-1}(\beta_j/\alpha_j)=\tan^{-1}(1/E_j).
\]
The numerical range of
$\Hmat+\mri\Smat$ provides the geometric plane in which these angles are
compared. Note that the normalization of $(\alpha_j,\beta_j)$ is arbitrary; Mathias and
Li fix this freedom by using a diagonalizing matrix with unit columns, denoted
by $X$ below.

Let $\widetilde{E}_j$ and $\widetilde{\theta}_j$ denote the corresponding
eigenvalues and eigenangles of the perturbed pair $(\tHmat,\tSmat)$. The
target is to choose $\Nshots$ such that, for the low-lying eigenvalue of
interest, the eigenangle error $|\widetilde{\theta}_j-\theta_j|$ is at most a
prescribed tolerance $\mathcal{E}$ with high probability. We choose to state
the target as an eigenangle error because it is the natural error metric of
the geometric formulation, which yields the following perturbation bound derived by Mathias and Li~\cite{mathias2004definite}.
\begin{theorem}[Mathias-Li bound~\cite{mathias2004definite}]
\label{thm:mathias-li-bound}
Let $(\Hmat,\Smat)$ be a definite Hermitian pair with $\Smat>0$, and let
$(\tHmat,\tSmat)$ be its perturbed pair. Let $X$ be the unit-column
matrix such that
\[
    X^\dagger(\Hmat+\mri\Smat)X
    =
    \operatorname{diag}(\alpha_1+\mri\beta_1,\ldots,\alpha_M+\mri\beta_M).
\]
Define
\[
    d_{\min}=\min_j |\alpha_j+\mri\beta_j|.
\]
If the perturbation is sufficiently small such that $\chi\|X\|^2<\min_j \beta_j$, 
then the generalized eigenangle error obeys
\begin{equation}
    |\widetilde{\theta}_j-\theta_j|
    \leq
    \sin^{-1}\!\left(
        \frac{\chi\|X\|^2}{d_{\min}}
    \right).
    \label{eq:angle-bound}
\end{equation}
\end{theorem}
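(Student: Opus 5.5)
The plan is to follow Mathias and Li: work entirely with the complex matrix $C=\Hmat+\mri\Smat$ and its perturbation $\widetilde C=\tHmat+\mri\tSmat=C+\Delta C$, where $\Delta C=\dHmat+\mri\dSmat$. Step one is a crude size bound, $\|\Delta C\|\le\|\dHmat\|+\|\dSmat\|\le\sqrt2\,\chi$. For the sharper statement I would use instead the quadratic-form bound $|x^\dagger\Delta C x|=|x^\dagger\dHmat x+\mri\, x^\dagger\dSmat x|\le\chi\|x\|^2$, which holds because the two real parts are separately bounded by $\|\dHmat\|\|x\|^2$ and $\|\dSmat\|\|x\|^2$. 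The point of working with quadratic forms is that everything happens in the numerical range, so this Cauchy--Schwarz-type bound with $\chi$ suffices.

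Step two shows that the perturbed pair is still definite. Write any $x=Xy$. Then
\[
x^\dagger C x=\sum_j |y_j|^2(\alpha_j+\mri\beta_j),\qquad \|x\|\le\|X\|\|y\|.
\]
The imaginary part satisfies $x^\dagger\Smat x\ge\min_j\beta_j\|y\|^2$. Hence
\[
\operatorname{Im}x^\dagger\widetilde Cx\ge(\min_j\beta_j-\chi\|X\|^2)\|y\|^2>0 .
\]
So $\tSmat>0$ and the pair $(\tHmat,\tSmat)$ stays definite. Its eigenangles $\widetilde\theta_j$ are therefore well defined and can be ordered, and a continuity/homotopy argument along $C+t\Delta C$, $t\in[0,1]$, keeps the same labeling.

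Step three is the core comparison, which I would carry out through a minimax characterization of eigenangles for definite pairs. For a definite pair, the eigenangles admit a Courant--Fischer characterization: $\theta_j$ is the max over $j$-dimensional subspaces of the min over unit $x$ in the subspace of $\arg(x^\dagger C x)$, with the appropriate orientation. Equivalently, a rotated pair $e^{-\mri\phi}C$ has positive-definite imaginary part exactly when $\phi$ separates the angles. The key estimate is pointwise in $x=Xy$ with $\|y\|=1$. We have $|x^\dagger Cx|\ge d_{\min}$ when $x$ is taken in the span of the relevant eigenvector directions; I would handle this by restricting to $y$ supported near index $j$, using the $j$-dimensional subspaces $X\,\mathrm{span}(e_1,\dots,e_j)$. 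In general, a perturbation of modulus at most $\chi\|X\|^2$ moves the argument of a complex number of modulus $\ge d_{\min}$ by at most $\sin^{-1}(\chi\|X\|^2/d_{\min})$. Plugging the subspaces spanned by columns of $X$ into the minimax for $\widetilde\theta_j$, and doing the same symmetrically with complementary subspaces, gives the upper and lower bounds and hence \eqref{eq:angle-bound}.

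The main obstacle is step three. Over a subspace spanned by several columns of $X$, the point $x^\dagger Cx$ is a convex combination of the $\alpha_k+\mri\beta_k$, and its modulus can be smaller than $d_{\min}$ through cancellation. The argument therefore has to use angular extremality rather than a modulus bound. Concretely, I would rotate the plane so that $\theta_j$ sits on a ray. I would then observe that all combinations of the $k\le j$ points lie in the cone on one side of it, and bound the distance from the perturbed point to that ray by $\chi\|X\|^2\|y\|^2$ against $|\alpha_j+\mri\beta_j|\|y\|^2$-scaled geometry. This is the Mathias--Li geometric lemma, and I would follow their construction at this step.
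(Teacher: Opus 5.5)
Your Steps 1 and 2 are correct. The bound $|x^\dagger\Delta C x|\le\chi\|x\|^2$ holds because both quadratic forms are real, and it keeps $\tSmat>0$. The Courant--Fischer description of the eigenangles is also legitimate: it is the Rayleigh-quotient minimax for $(\tHmat,\tSmat)$ composed with the decreasing map $\cot^{-1}$. And $X\,\mathrm{span}(e_1,\dots,e_j)$ is the right test subspace for a lower bound on $\widetilde\theta_j$ when $\theta_1\ge\cdots\ge\theta_M$.

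The paper gives no argument of its own here; it only cites Mathias--Li. Your reconstruction therefore has to stand on its own, and it does not. The one estimate that carries the theorem is deferred to ``the Mathias--Li geometric lemma,'' and the mechanism you sketch would not produce it.

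What must be controlled is not the modulus of $x^\dagger Cx$, nor its distance to the ray at angle $\theta_j$ measured against $|\alpha_j+\mri\beta_j|$. It is the signed distance to the line at a shifted angle $\theta_j-\psi$. Moreover, $|\alpha_j+\mri\beta_j|$ is the wrong scale. Take $\Hmat,\Smat$ diagonal, so $X=I$, with some $\theta_k$ just above $\theta_j$ and $r_k=|\alpha_k+\mri\beta_k|\ll r_j$. Perturbing only its diagonal entry rotates that point by about $\sin^{-1}(\chi/r_k)$. This drags the $j$th sorted angle much farther than $\sin^{-1}(\chi/r_j)$, which is why the statement carries $d_{\min}$. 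Finally, you use $\chi\|X\|^2<\min_j\beta_j$ only for definiteness, but it is needed a second time.

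The missing step is short. Choose $\psi\in(0,\pi/2)$ with $\sin\psi>\chi\|X\|^2/d_{\min}$; this is possible since $\chi\|X\|^2<\min_j\beta_j\le d_{\min}$. Put $\phi=\theta_j-\psi$. If $\phi\le 0$ there is nothing to prove, since $\widetilde\theta_j>0$. For $w$ in the open upper half-plane, $\arg w>\phi$ if and only if $\operatorname{Im}(e^{-\mri\phi}w)>0$. For $x=Xy$ with $y$ supported on $\{1,\dots,j\}$,
\[
\operatorname{Im}\bigl(e^{-\mri\phi}x^\dagger\widetilde Cx\bigr)\ge\sum_{k\le j}|y_k|^2\,r_k\sin(\theta_k-\phi)-\chi\|X\|^2\|y\|^2 .
\]
The right-hand side is linear in the weights $|y_k|^2$, so no cancellation can occur.

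Each coefficient $r_k\sin(\theta_k-\phi)$ exceeds $\chi\|X\|^2$:
\begin{itemize}
\item If $\theta_k-\phi\le\pi/2$, then $\sin(\theta_k-\phi)\ge\sin\psi$ and $r_k\ge d_{\min}$.
\item If $\theta_k-\phi>\pi/2$, then $\phi>0$ and $\theta_k<\pi$ give $\sin(\theta_k-\phi)\ge\sin\theta_k$. Hence $r_k\sin(\theta_k-\phi)\ge\beta_k>\chi\|X\|^2$; this is where $\min_j\beta_j$ re-enters.
\end{itemize}
So $\arg(x^\dagger\widetilde Cx)>\phi$ on the whole test subspace, and $\widetilde\theta_j>\theta_j-\psi$.

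For the upper bound, apply the same argument to $(-\Hmat,\Smat)$. It has the same $X$, $\beta_k$, $r_k$ and $\chi$, and eigenangles $\pi-\theta_k$. Letting $\sin\psi\downarrow\chi\|X\|^2/d_{\min}$ then gives the claimed bound. No homotopy argument is needed for the labeling, since the minimax already refers to sorted angles.
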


\begin{proof}
See Ref.~\cite{mathias2004definite}.
\end{proof}

We now show in Corollary 1 that combining Theorem~\ref{thm:mathias-li-bound} with Eq.~\eqref{eq:chi-rmt-scaling} 
gives the best known upper bound on
the per-element shot count, which scales as $\mathcal{O}(M^3)$~\cite{lee2024sampling,ren2026error}.

\begin{corollary}[Best known shot-cost upper bound]
\label{cor:worst-case-shot-cost}
Assuming the random-matrix scaling Eq.~\eqref{eq:chi-rmt-scaling}, the number
of shots per matrix element
sufficient to make the finite-shot eigenangle error at most
$\mathcal{E}$ is
\begin{equation}
    \Nshots =
    \mathcal{O}\!\left(\frac{M^3}{d_{\min}^2\mathcal{E}^2}\right).
    \label{eq:worst-case-shot-scaling}
\end{equation}
\end{corollary}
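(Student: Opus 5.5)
The plan is to combine Theorem~\ref{thm:mathias-li-bound} with Eq.~\eqref{eq:chi-rmt-scaling} and then bound the only remaining $M$-dependent quantity, $\|X\|^2$, by a worst-case dimension factor. By Eq.~\eqref{eq:angle-bound}, the eigenangle error is at most $\mathcal{E}$ whenever $\chi\|X\|^2/d_{\min}\le \sin\mathcal{E}$, provided the perturbation is small enough that Theorem~\ref{thm:mathias-li-bound} applies. For small $\mathcal{E}$ we have $\sin\mathcal{E}\ge 2\mathcal{E}/\pi$ on $[0,\pi/2]$, so it suffices to require
\[
\chi\|X\|^2 \le \tfrac{2}{\pi}\, d_{\min}\,\mathcal{E}.
\]

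The key structural step is the crude bound $\|X\|^2\le M$. Since $X$ has unit columns, its Frobenius norm is $\|X\|_F^2=\sum_j \|x_j\|^2 = M$. Because the spectral norm is bounded by the Frobenius norm, this gives $\|X\|^2\le\|X\|_F^2=M$. This is exactly the \emph{worst-case dimension factor} mentioned in the abstract. It is tight only when the columns of $X$ are nearly parallel, that is, for an ill-conditioned pair; later results in the paper replace it with a condition number.

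Next we substitute the random-matrix scaling. With high probability $\chi\le C\sqrt{M/\Nshots}$ for some constant $C$. The sufficient condition then becomes
\[
C\sqrt{M/\Nshots}\cdot M \le \tfrac{2}{\pi} d_{\min}\mathcal{E},
\]
which rearranges to
\[
\Nshots \ge \frac{\pi^2 C^2 M^3}{4\, d_{\min}^2\mathcal{E}^2},
\]
and this is Eq.~\eqref{eq:worst-case-shot-scaling}.

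The remaining step is to check the hypothesis $\chi\|X\|^2<\min_j\beta_j$ of Theorem~\ref{thm:mathias-li-bound}. This is the part I expect to need the most care, because it is a separate smallness condition. I would handle it by noting that it requires $\Nshots\gtrsim M^3/(\min_j\beta_j)^2$, which again scales as $M^3$. Taking $\Nshots$ as the maximum of the two requirements leaves the stated $\mathcal{O}(M^3)$ scaling unchanged in $M$. The $\beta$-dependent constant is absorbed into the $\mathcal{O}$, or equivalently it is dominated by the $d_{\min}\mathcal{E}$ condition once $\mathcal{E}$ is small relative to $\min_j\beta_j/d_{\min}$. I would state this regime explicitly as part of the corollary's assumptions.
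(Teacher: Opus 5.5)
Your proposal is correct and follows the paper's proof. The paper bounds $\|X\|^2=\lambda_{\max}(X^\dagger X)\le\operatorname{Tr}(X^\dagger X)=M$, which is your Frobenius-norm bound, and substitutes Eq.~\eqref{eq:chi-rmt-scaling} into Theorem~\ref{thm:mathias-li-bound}; your treatment of the $\sin^{-1}$ via $\sin\mathcal{E}\ge 2\mathcal{E}/\pi$, and your check that the accuracy requirement implies the validity condition $\chi\|X\|^2<\min_j\beta_j$ once $\mathcal{E}$ is small compared with $\min_j\beta_j/d_{\min}$, supply care that the paper's short proof leaves implicit.
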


\begin{proof}
For a Hermitian matrix $A$, let $\lambda_{\max}(A)$ and $\lambda_{\min}(A)$
be its largest and smallest eigenvalues.
Since $X$ has unit columns,
$\|X\|^2=\lambda_{\max}(X^\dagger X)\leq\operatorname{Tr}(X^\dagger X)=M$.
Theorem~\ref{thm:mathias-li-bound} therefore gives
$|\widetilde{\theta}_j-\theta_j|
=\mathcal{O}(\chi M/d_{\min})$. Substituting Eq.~\eqref{eq:chi-rmt-scaling}
and requiring this eigenangle error to be at most
$\mathcal{E}$ gives
Eq.~\eqref{eq:worst-case-shot-scaling}.
\end{proof}

The source of the severe $M$ dependence in
Corollary~\ref{cor:worst-case-shot-cost} is the dimension bound
$\|X\|^2\leq M$. This bound ignores the overlap geometry of the nonorthogonal
reference states. The following lemma replaces the dimension factor by the
overlap condition number; after thresholding is introduced below, this is the
step that will improve the measurement-cost upper bound.

\begin{lemma}[Overlap condition-number bound]
\label{lem:x-condition-number}
In the setting of Theorem~\ref{thm:mathias-li-bound}, the unit-column matrix
$X$ satisfies
\begin{equation}
    \|X\|^2 \leq \kappa(\Smat).
    \label{eq:x-condition-number}
\end{equation}
Here $\kappa(\Smat)=\lambda_{\max}(\Smat)/\lambda_{\min}(\Smat)$ is the
condition number of the positive definite overlap matrix.
\end{lemma}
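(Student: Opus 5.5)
The plan is to use the two congruence relations hidden in the diagonalization $X^\dagger(\Hmat+\mri\Smat)X=\operatorname{diag}(\alpha_j+\mri\beta_j)$. Taking Hermitian and anti-Hermitian parts separately gives $X^\dagger\Hmat X=\operatorname{diag}(\alpha_j)$ and $X^\dagger\Smat X=D:=\operatorname{diag}(\beta_1,\ldots,\beta_M)$, because $\Hmat$ and $\Smat$ are Hermitian and the $\alpha_j,\beta_j$ are real. Two facts about $D$ then follow from $\Smat>0$. First, each $\beta_j=x_j^\dagger\Smat x_j$, where $x_j$ is the $j$th column of $X$, so $\beta_j>0$. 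Second, since $\|x_j\|=1$, the Rayleigh-quotient bound gives
\[
\lambda_{\min}(\Smat)\le\beta_j\le\lambda_{\max}(\Smat).
\]
Also, $X$ is nonsingular: $X^\dagger\Smat X=D$ is positive definite, hence of full rank, which forces $X$ to have full rank.

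Next we invert the $\Smat$-relation. From $X^\dagger\Smat X=D$ we get $\Smat=X^{-\dagger}DX^{-1}$, and therefore
\[
\Smat^{-1}=XD^{-1}X^\dagger=(XD^{-1/2})(XD^{-1/2})^\dagger .
\]
This yields $\|XD^{-1/2}\|^2=\|\Smat^{-1}\|=1/\lambda_{\min}(\Smat)$. This identity is the key step: it ties the norm of the unit-column diagonalizer directly to the overlap spectrum, in place of the crude trace bound $\operatorname{Tr}(X^\dagger X)=M$ used in Corollary~\ref{cor:worst-case-shot-cost}.

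Finally, write $X=(XD^{-1/2})D^{1/2}$ and use submultiplicativity:
\[
\|X\|^2\le\|XD^{-1/2}\|^2\,\|D^{1/2}\|^2=\frac{\max_j\beta_j}{\lambda_{\min}(\Smat)}\le\frac{\lambda_{\max}(\Smat)}{\lambda_{\min}(\Smat)}=\kappa(\Smat),
\]
where the last inequality uses the Rayleigh-quotient bound on $\beta_j$ from the first step. This gives Eq.~\eqref{eq:x-condition-number}.

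I expect no serious obstacle; the argument is short. The only point needing care is the justification that both $X^\dagger\Hmat X$ and $X^\dagger\Smat X$ are diagonal. This requires splitting the complex diagonal congruence into Hermitian and anti-Hermitian parts, which works because congruence by $X$ preserves Hermiticity. Once that split is made, the unit-column normalization does exactly the work needed to bound $\max_j\beta_j$ by $\lambda_{\max}(\Smat)$.
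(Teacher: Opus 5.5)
Your proof is correct and follows essentially the same route as the paper's: the paper writes $X=\Smat^{-1/2}UD$ with $U$ unitary and $D=\operatorname{diag}(\sqrt{\beta_j})$, giving $\|X\|^2\le\lambda_{\min}(\Smat)^{-1}\max_j\beta_j$. It then bounds $\beta_j=x_j^\dagger\Smat x_j\le\lambda_{\max}(\Smat)$ exactly as you do, and your factor $XD^{-1/2}$ (with your $D=\operatorname{diag}(\beta_j)$) plays the role of $\Smat^{-1/2}U$. The only difference is that you derive the factorization from $X^\dagger\Smat X=D$ for an arbitrary unit-column diagonalizer, rather than building $X$ explicitly from the eigenvectors of $\Smat^{-1/2}\Hmat\Smat^{-1/2}$. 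This avoids having to identify the Mathias--Li $X$ with that constructed matrix, which the paper does implicitly (for instance, when eigenvalues are degenerate).
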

\begin{proof}
See End Matter.
\end{proof}

Replacing the worst-case factor $M$ by $\kappa(\Smat)$ in the upper bound is
useful only if $\kappa(\Smat)$ grows sublinearly with
$M$. 
Such conditioning is not guaranteed for the raw overlap matrix, which can
become nearly singular as more nonorthogonal reference states are added.
Practical NOQE calculations address this by thresholding the overlap matrix:
directions with small overlap eigenvalues are discarded before the generalized
eigenvalue problem is solved~\cite{baek2023say}. We formalize this thresholded setting next.

\begin{figure*}[t]
\centering
\includegraphics[width=\textwidth]{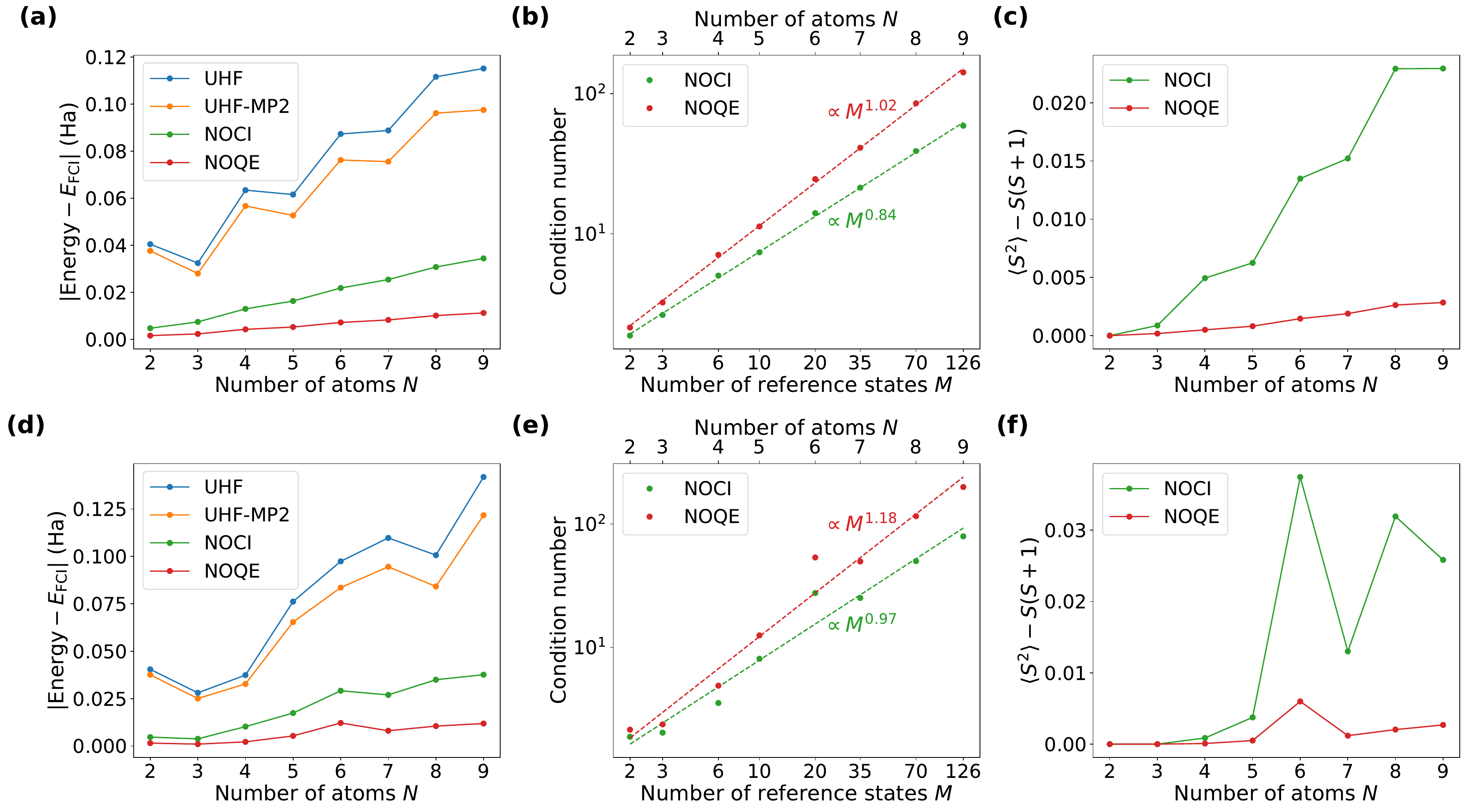}
\caption{\label{fig:h-noiseless}
Performance of noiseless NOQE for (a)--(c) linear hydrogen chains (upper panels) and (d)--(f) regular hydrogen rings (lower panels) of various numbers of atoms $N$ in the STO-3G basis, compared with classical methods such as NOCI, UHF and UHF-MP2. 
The distance between nearest-neighbor atoms is set as 1.5~\AA. 
\textbf{(a,d)} Ground-state energy error $|E-E_{\rm FCI}|$ versus $N$. 
\textbf{(b,e)} Condition number of the
unthresholded overlap matrix $\kappa(\Smat)$ versus the number of reference states $M(N)=\binom{N}{\lceil N/2\rceil}$. 
\textbf{(c,f)} Spin contamination $\langle \hat{S}^2\rangle-S(S+1)$ versus $N$, where $S=0\:(1/2)$ for even (odd) $N$. 
}
\end{figure*}

\begin{definition}[Thresholding scheme]
\label{def:thresholding-scheme}
For a Hermitian matrix pair $(\Hmat,\Smat)$ with $\Smat>0$, and for a threshold
$\vareps>0$, diagonalize the overlap matrix as
\begin{align*}
    \Smat
    &=
    V\Lambda V^\dagger,
    \\
    \Lambda
    &=
    \operatorname{diag}(\lambda_1,\ldots,\lambda_M).
\end{align*}
Let $V_{>\vareps}$ be the matrix whose columns are the eigenvectors with
$\lambda_\ell>\vareps$. Applying the threshold $\vareps$ replaces $(\Hmat,\Smat)$ by
the retained pair
\begin{align}
    \Hmat_{>\vareps}
    &=
    V_{>\vareps}^\dagger
    \Hmat
    V_{>\vareps}, \nonumber\\
    \Smat_{>\vareps}
    &=
    V_{>\vareps}^\dagger
    \Smat
    V_{>\vareps}.
    \label{eq:thresholded-pair}
\end{align}
For a family of pairs $(\Hmat^{(N)},\Smat^{(N)})$ indexed by system size $N$,
a thresholding scheme is a choice of thresholds $\{\vareps_N\}_N$, or an
equivalent rule for choosing them, applied to each pair as above.
\end{definition}

\begin{definition}[Scalable thresholding scheme]
\label{def:scalable-thresholding}
A thresholding scheme in the sense of Def.~\ref{def:thresholding-scheme}
is scalable for the $j$th eigenvalue if the retained overlap matrices satisfy
$\kappa(\Smat^{(N)}_{>\vareps_N})=\mathcal{O}(1)$ and the thresholding bias
satisfies
\begin{equation}
    \bigl|E_{j,>\vareps_N}^{(N)}-E_j^{(N)}\bigr|
    =
    \mathcal{O}(1)
    \label{eq:thresholding-bias-bound}
\end{equation}
for all $N$, where $E_j^{(N)}$ is the $j$th generalized eigenvalue in the
unthresholded reference space and $E_{j,>\vareps_N}^{(N)}$ is the corresponding
eigenvalue after thresholding. If the bound in
Eq.~\eqref{eq:thresholding-bias-bound} can be
chosen to be a prescribed tolerance $\eta$, we call the scheme
$\eta$-scalable.
\end{definition}

With these definitions, the condition-number bound can be applied to the
retained pair. This gives the following per-matrix-element measurement-cost
upper bound.

\begin{corollary}[Per-element shot cost under scalable thresholding]
\label{cor:scalable-thresholding-shot-cost}
In the setting of Theorem~\ref{thm:mathias-li-bound}, consider a family of
NOQE instances with a scalable thresholding scheme in the sense of
Def.~\ref{def:scalable-thresholding}. Dropping
the superscripts and subscripts denoting the system size $N$ for brevity, let
$M_{>\vareps}$ denote the number of retained overlap eigenvalues, and let
$X_{>\vareps}$ be the unit-column matrix that diagonalizes the retained pair
of Eq.~\eqref{eq:thresholded-pair},
\begin{equation*}
    X_{>\vareps}^\dagger
    (\Hmat_{>\vareps}+\mri\Smat_{>\vareps})
    X_{>\vareps}
    =
    \operatorname{diag}
    \bigl(\alpha_{j,>\vareps}+\mri\beta_{j,>\vareps}\bigr)_{j=1}^{M_{>\vareps}},
\end{equation*}
where $(\alpha_{j,>\vareps},\beta_{j,>\vareps})$ are the normalized
generalized eigenvalues of the retained pair, and define
\begin{equation}
    d_{\min,>\vareps}
    =
    \min_j |\alpha_{j,>\vareps}+\mri\beta_{j,>\vareps}|.
    \label{eq:dmin-thresholded}
\end{equation}
Assuming the random-matrix scaling
in Eq.~\eqref{eq:chi-rmt-scaling} and the validity condition
$\chi\kappa(\Smat_{>\vareps})<\vareps$, the number of shots per matrix element
sufficient to make the finite-shot eigenangle error of the retained pair at
most $\mathcal{E}$ is
\begin{equation}
    \Nshots =
    \mathcal{O}\!\left(\frac{M}{d_{\min,>\vareps}^2\mathcal{E}^2}\right)
    =
    \mathcal{O}\!\left(\frac{M}{\mathcal{E}^2}\right).
    \label{eq:shot-scaling}
\end{equation}
If the scheme is $\eta$-scalable, then the error of
$\widetilde{E}_{j,>\vareps}$, the finite-shot eigenvalue estimate of the
retained pair, relative to the unthresholded eigenvalue $E_j$ is at most
\begin{equation}
    \eta
    +
    \sqrt{(1+E_{j,>\vareps}^2)(1+\widetilde{E}_{j,>\vareps}^2)}\,
    \mathcal{E}.
    \label{eq:total-eigenvalue-error}
\end{equation}
\end{corollary}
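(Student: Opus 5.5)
The plan is to apply Theorem~\ref{thm:mathias-li-bound} to the retained pair $(\Hmat_{>\vareps},\Smat_{>\vareps})$ rather than to the full pair, and then to replace $\|X_{>\vareps}\|^2$ by $\kappa(\Smat_{>\vareps})$ using Lemma~\ref{lem:x-condition-number}. Throughout we model the finite-shot estimates of the retained pair as $\tHmat_{>\vareps}=V_{>\vareps}^\dagger\tHmat V_{>\vareps}$ and $\tSmat_{>\vareps}=V_{>\vareps}^\dagger\tSmat V_{>\vareps}$, with the projector $V_{>\vareps}$ held fixed. Then the retained perturbations are $V_{>\vareps}^\dagger\dHmat V_{>\vareps}$ and $V_{>\vareps}^\dagger\dSmat V_{>\vareps}$. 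Since $V_{>\vareps}$ has orthonormal columns, their spectral norms are at most $\|\dHmat\|$ and $\|\dSmat\|$. Hence the retained perturbation size $\chi_{>\vareps}$ satisfies $\chi_{>\vareps}\le\chi=\mathcal{O}(\sqrt{M/\Nshots})$ by Eq.~\eqref{eq:chi-rmt-scaling}. This is where the factor $M$ (rather than $M_{>\vareps}$) enters.

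Next I check the hypotheses of Theorem~\ref{thm:mathias-li-bound} for the retained pair. $\Smat_{>\vareps}=\operatorname{diag}(\lambda_\ell)_{\lambda_\ell>\vareps}$ is positive definite with $\lambda_{\min}(\Smat_{>\vareps})>\vareps$, so the retained pair is definite. For the validity condition $\chi_{>\vareps}\|X_{>\vareps}\|^2<\min_j\beta_{j,>\vareps}$, I use Lemma~\ref{lem:x-condition-number} to get $\|X_{>\vareps}\|^2\le\kappa(\Smat_{>\vareps})$. I also need a lower bound on $\beta_{j,>\vareps}$. Writing $x_j$ for a unit column of $X_{>\vareps}$, we have $\beta_{j,>\vareps}=x_j^\dagger\Smat_{>\vareps}x_j\ge\lambda_{\min}(\Smat_{>\vareps})>\vareps$. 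The assumed condition $\chi\kappa(\Smat_{>\vareps})<\vareps$ therefore implies the theorem's hypothesis. This also shows $d_{\min,>\vareps}\ge\min_j\beta_{j,>\vareps}>\vareps$, but it is the scalability assumption that lets $d_{\min,>\vareps}$ be treated as $\Omega(1)$. I expect this to be the main obstacle: one must argue that $d_{\min,>\vareps}$ is bounded below uniformly in $N$. I would first try $d_{\min,>\vareps}\ge\beta_{j,>\vareps}\ge\lambda_{\min}(\Smat_{>\vareps})=\lambda_{\max}(\Smat_{>\vareps})/\kappa(\Smat_{>\vareps})$. Then I would note that $\lambda_{\max}(\Smat_{>\vareps})\ge\max_J S_{JJ}=1$ for normalized references, so $d_{\min,>\vareps}\ge1/\kappa(\Smat_{>\vareps})=\Omega(1)$. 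Theorem~\ref{thm:mathias-li-bound} then gives
\[
|\widetilde{\theta}_{j,>\vareps}-\theta_{j,>\vareps}|\le\sin^{-1}\!\left(\frac{\chi\,\kappa(\Smat_{>\vareps})}{d_{\min,>\vareps}}\right).
\]
Requiring the argument to be at most $\sin\mathcal{E}\approx\mathcal{E}$ and substituting $\chi=\mathcal{O}(\sqrt{M/\Nshots})$ with $\kappa=\mathcal{O}(1)$ gives Eq.~\eqref{eq:shot-scaling}.

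For Eq.~\eqref{eq:total-eigenvalue-error}, I would use the triangle inequality
\[
|\widetilde{E}_{j,>\vareps}-E_j|\le|E_{j,>\vareps}-E_j|+|\widetilde{E}_{j,>\vareps}-E_{j,>\vareps}|.
\]
The first term is at most $\eta$ by $\eta$-scalability. For the second term, I convert the angle error to an eigenvalue error via $E=\cot\theta$. The identity $|\cot a-\cot b|=|\sin(b-a)|/(\sin a\sin b)$ with $\sin\theta=1/\sqrt{1+E^2}$ gives
\[
|\widetilde{E}-E|=\sqrt{(1+E^2)(1+\widetilde{E}^2)}\,|\sin(\widetilde{\theta}-\theta)|\le\sqrt{(1+E^2)(1+\widetilde{E}^2)}\,\mathcal{E}.
\]
This holds as long as the two angles are compared on the same branch, which is guaranteed by the retained pair being definite with $\beta>0$.
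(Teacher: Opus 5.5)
Your proposal is correct and follows essentially the same route as the paper's proof. The steps coincide: you bound the retained perturbation by $\chi$ using the orthonormal columns of $V_{>\vareps}$, and you verify the Mathias--Li validity condition via the Rayleigh-quotient bound $\beta_{j,>\vareps}\ge\lambda_{\min}(\Smat_{>\vareps})>\vareps$ together with Lemma~\ref{lem:x-condition-number}. You then obtain $d_{\min,>\vareps}\ge 1/\kappa(\Smat_{>\vareps})$, and convert the angle error to an eigenvalue error via the cotangent identity and the triangle inequality with $\eta$.

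The only step you leave implicit, which the paper states, is that $\lambda_{\max}(\Smat_{>\vareps})=\lambda_{\max}(\Smat)$, since the largest overlap eigenvalue is retained whenever anything is. This is what justifies $\lambda_{\max}(\Smat_{>\vareps})\ge\max_J S_{JJ}=1$, because the diagonal entries of $\Smat_{>\vareps}$ itself are the retained eigenvalues, not ones.
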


\begin{proof}
See End Matter.
\end{proof}

Thus, under a scalable thresholding scheme, the worst-case Mathias-Li shot-cost upper
bound improves from $\mathcal{O}(M^3)$ to $\mathcal{O}(M)$ per matrix element.
We now give numerical evidence that such a scalable thresholding scheme exists
for hydrogen-chain and hydrogen-ring NOQE.

\begin{figure*}[t]
\centering
\includegraphics[width=\textwidth]{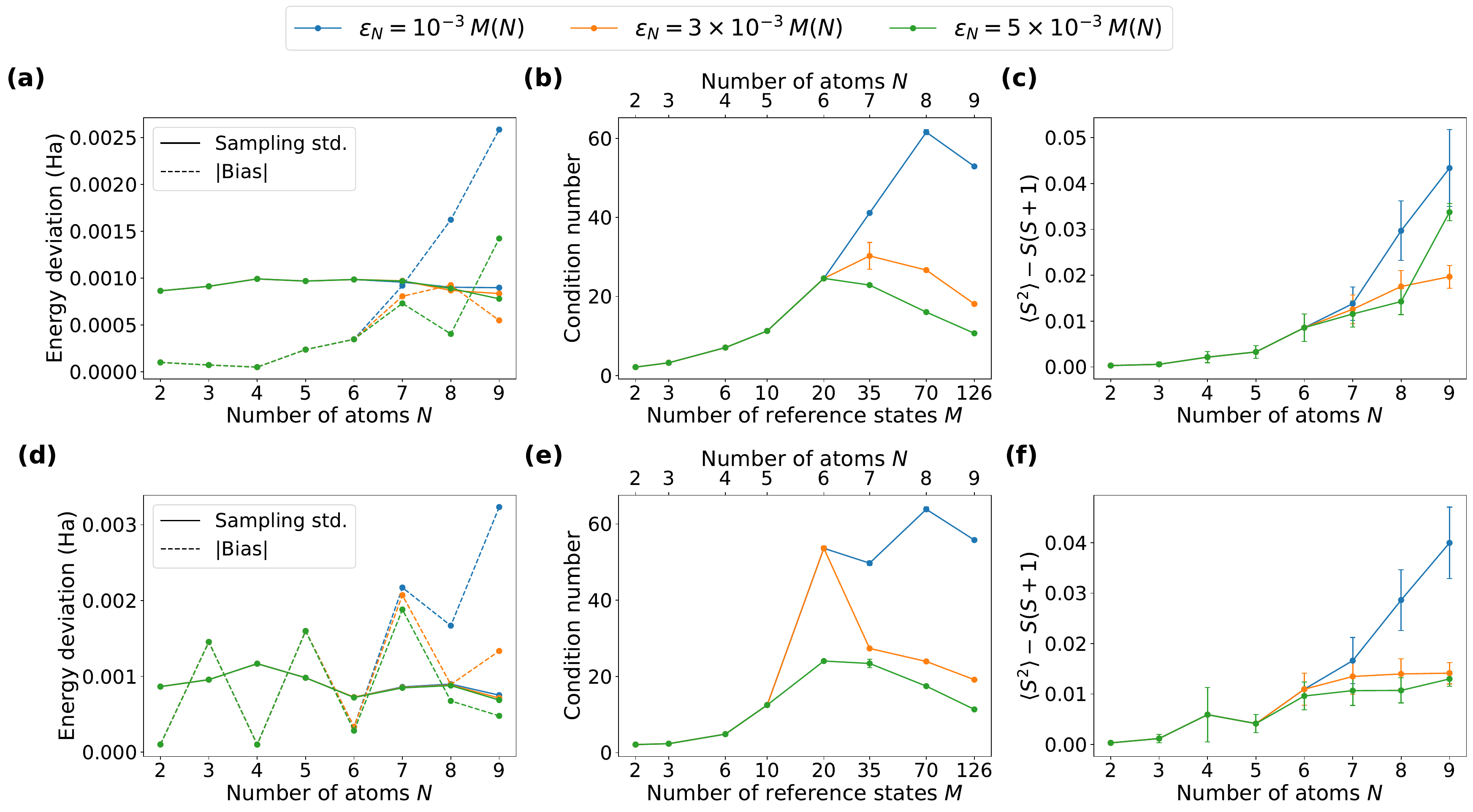}
\caption{\label{fig:h-noisy}
Performance of finite-shot NOQE with overlap thresholding for the ground state energies $E_0$ of (a)--(c) linear
hydrogen chains (upper panels) and (d)--(f) regular hydrogen rings (lower panels). Diagonal
entries of $\Hmat$ are subject to real zero-mean noise with standard deviation $\sigma=10^{-3}$~Ha, while
off-diagonal entries of $\Hmat$ and $\Smat$ are perturbed on
the upper triangle by independent zero-mean complex noise whose real and
imaginary parts each have standard deviation $\sigma =10^{-3}/\sqrt{2}$ (in Ha for $\Hmat$ and dimensionless for $\Smat$); the lower-triangular
perturbations are set by complex conjugation to preserve Hermiticity.
The results are averaged over
$300$ random realizations of noise. We consider a thresholding scheme where
$\vareps_N =c M(N), M=\binom{N}{\lceil N/2\rceil}$, with $c=10^{-3}, 3 \times 10^{-3}, 5 \times 10^{-3}$.
\textbf{(a,d)} Sampling standard deviation (solid lines) and absolute bias (dotted lines) of the noisy NOQE energy versus $N$,
where the bias is with respect to the corresponding noiseless NOQE energy.
\textbf{(b,e)} Condition number of the thresholded overlap matrix $\kappa(\Smat_{>\vareps_N})$ versus the number of reference states
$M(N)$. \textbf{(c,f)} Spin contamination metric $\langle \hat{S}^2\rangle-S(S+1)$ versus $N$, where $S=0\:(1/2)$ for even (odd) $N$.
}
\end{figure*}

\emph{Numerical evidence.} ---
We used \textsc{PySCF}~\cite{sun2020recent} for the
electronic-structure integrals and
\textsc{OpenFermion}~\cite{mcclean2020openfermion} for the second-quantized
Hamiltonians, with help from AI assistants Claude Code and ChatGPT Codex. The code that reproduces the results is publicly
available~\cite{noqe_repo}.

We test the theory on two families of molecular Hamiltonians: linear hydrogen
chains and regular hydrogen rings. For each $N$-atom system, the electronic
Hamiltonian is represented in the STO-3G basis~\cite{hehre1969self}, with
neighboring atoms separated by 1.5~\AA. In this basis the number of spin orbitals $N_{SO}$ is equal to the number of H atoms, $N_{SO}=N$.
The 1.5~\AA\:bond length places the
hydrogen systems in a strongly correlated regime~\cite{ganoe2024notion}, where unrestricted
spin-symmetry broken determinants provide a natural starting point for a compact
multireference subspace.
We consider the spin sector $m_s=0\:(1)$ for even (odd) $N$; the
corresponding target total spin is $S=0\:(1/2)$ for even (odd) $N$.

For each Hamiltonian, we determine the reference subspace by enumerating
localized spin assignments in the specified spin sector. Let $N_\alpha$ and
$N_\beta$ denote the numbers of spin-up and spin-down electrons. As
$N=N_\alpha+N_\beta$ for the hydrogen systems considered here, the spin sector fixes
$N_\alpha=\lceil N/2 \rceil$ and $N_\beta=\lfloor N/2 \rfloor$. The reference states are
generated by making localized spin assignments on the $N$ hydrogen sites:
choose a subset of $N_\alpha$ sites to carry spin-up electrons, place spin-down
electrons on the remaining sites, and use the resulting localized pattern as
the initial guess for a UHF calculation, as in Sec.~III\,D of
Ref.~\cite{baek2023say}. Varying the spin-up subset gives
$M(N)=\binom{N}{N_\alpha}=\binom{N}{\lceil N/2\rceil}$ reference states.

Figure~\ref{fig:h-noiseless} first studies the system-size dependence of noiseless
NOQE, where the projected Hamiltonian and overlap matrices are evaluated
exactly before any finite-shot measurement noise is added. In
Fig.~\ref{fig:h-noiseless}(a) and (d), we compare the NOQE ground-state energy
error with those of several classical methods: unrestricted Hartree-Fock (UHF) uses the lowest-energy determinant
among the $M$ UHF states before dressing, UHF with second-order Møller–Plesset perturbation theory (UHF-MP2), applies an MP2 correction to
this single determinant~\cite{moller1934note}, non-orthogonal configuration interaction (NOCI) diagonalizes in the
subspace of the $M$ UHF states~\cite{thom2009hartree,sundstrom2014nonorthogonal},
and NOQE diagonalizes in the subspace of the $M$ dressed UHF states, with the
dressing operator chosen as the coupled-cluster operator with MP2
amplitudes~\cite{baek2023say}. These energy panels show that NOQE remains
close to FCI across the tested systems, improving over NOCI by
roughly a factor of three as the dressing captures dynamic correlation absent from the bare UHF subspace.

Figure~\ref{fig:h-noiseless}(b) and (e) show the bare condition number of the
overlap matrix $\kappa(\Smat)$, before any thresholding, as a function of system size. Both
NOCI and NOQE exhibit a clear increasing trend, with the NOQE overlap matrices
slightly more ill-conditioned. Figures~\ref{fig:h-noiseless}(c)
and (f) plot the spin contamination, measured by $\langle \hat{S}^2\rangle-S(S+1)$, which is
substantially smaller for NOQE than for NOCI, consistent with the
spin-purification property of NOQE emphasized in Ref.~\cite{baek2023say}.

Figure~\ref{fig:h-noisy} introduces finite-shot measurement sampling noise and overlap
thresholding to assess the scalability criteria in Def.~\ref{def:scalable-thresholding}.
We set the standard deviation $\sigma$ of the independent matrix-element noise to
$10^{-3}$ (see the caption of Fig.~\ref{fig:h-noisy} for details), which places
the simulations in a regime where a scalable thresholding scheme exists and the overall
energy errors remain below chemical accuracy.
The number of shots per matrix element required to maintain this fixed element
noise may further grow modestly with system size, depending on the Pauli-word
grouping~\cite{verteletskyi2020measurement,yen2023deterministic,shlosberg2023adaptive}
or basis-rotation grouping~\cite{huggins2021efficient} used in the measurement.
We apply here three thresholding schemes $\vareps_N=cM(N)$, with
$c\in\{10^{-3},3\times10^{-3},5\times10^{-3}\}$ and solve the resulting
thresholded generalized eigenvalue problems over $300$ random realizations of
the Hermitian measurement sampling noise model to estimate the ground-state
energy $E_0$. 

First, to test whether the thresholding bias is bounded with system size, 
Fig.~\ref{fig:h-noisy}(a) and (d) report the mean bias relative to
the corresponding noiseless and unthresholded NOQE ground state energy. For an adequately chosen
thresholding scheme ($c=3\times10^{-3}$ for hydrogen chains and
$c=5\times10^{-3}$ for hydrogen rings), the bias does not show a tendency to
grow with the system size $N$ over the tested range.

Next, Fig.~\ref{fig:h-noisy}(b) and (e) show the second requirement in
Def.~\ref{def:scalable-thresholding}: after thresholding, the retained
overlap condition number $\kappa(\Smat_{>\vareps_N})$ stays controlled for both hydrogen chains and rings, in
contrast to the bare condition numbers in Fig.~\ref{fig:h-noiseless}(b) and
(e). As the system size grows, the multireference subspace may contain
directions with small overlap eigenvalues: these increase the bare condition
number but need not be important for accurately estimating the energy
eigenvalues. Together with the bias behavior in the preceding paragraph, this
provides evidence that hydrogen chains and rings admit a scalable thresholding
scheme, such that the $\mathcal{O}(M)$ measurement-cost scaling in
Corollary~\ref{cor:scalable-thresholding-shot-cost} applies. Figure~\ref{fig:h-noisy}(c)
and (f) additionally show that the spin contamination remains small after
thresholding.

We emphasize that Def.~\ref{def:scalable-thresholding} concerns the thresholding bias in
the noiseless projected matrix pair $\textbf{H}, \textbf{S}$, while the measurement-cost bound in
Eq.~\eqref{eq:shot-scaling} comes from the perturbation norm in
Eq.~\eqref{eq:chi-rmt-scaling}, which grows as $\mathcal{O}(\sqrt{M})$ at a fixed
noise level per matrix element. Figure~\ref{fig:h-noisy}(a) and (d) suggest a stronger empirical
behavior: both the sampling standard deviation and the bias remain bounded even
after finite-shot measurement noise is added. Thus, the numerical results
suggest that, in practice, the measurement cost may grow more slowly with $M$
than the linear scaling allowed by Eq.~\eqref{eq:shot-scaling}.

This is likely because Eq.~\eqref{eq:shot-scaling} is governed by the worst-case quantity
$d_{\min,>\vareps}$, whereas the actual sensitivity of an individual eigenvalue
is set by its own magnitude
$d_{j,>\vareps}=|\alpha_{j,>\vareps}+\mri\beta_{j,>\vareps}|$~\cite{mathias2004definite}.
Let $d_{0,>\vareps}$ denote this magnitude for the ground state. Since
$\alpha_{0,>\vareps}/\beta_{0,>\vareps}$ equals the ground-state energy, which
grows extensively with $N$, and
$\beta_{0,>\vareps}\geq\lambda_{\min}(\Smat_{>\vareps})=\Omega(1)$ under
scalable thresholding, $d_{0,>\vareps}$ grows with $N$, whereas
$d_{\min,>\vareps}$ is dominated by the least robust retained eigenvalue and
does not benefit from this growth. We therefore expect $d_{0,>\vareps}$,
rather than $d_{\min,>\vareps}$, to set the scale of the ground-state
eigenangle error in practice. Replacing $d_{\min,>\vareps}$ by the growing
$d_{0,>\vareps}$ in Eq.~\eqref{eq:shot-scaling} would then allow the
measurement cost to grow more slowly with $M(N)$, as the numerical results
suggest. Proving such a bound, however, requires an
assumption on the perturbation size that is stronger than the condition
$\chi \kappa(\Smat_{>\vareps})<\vareps$ of Corollary~\ref{cor:scalable-thresholding-shot-cost}: the perturbation
must be small enough that none of the less robust eigenvalues crosses the
ground-state eigenvalue~\cite{mathias2004definite}.

\emph{Discussion and outlook.} ---
Our theoretical and numerical results highlight the importance of choosing
an appropriate thresholding scheme, which must balance the improved conditioning
against the bias introduced by discarding overlap directions~\cite{devriendt2026new}. 
A complementary approach is to construct a more compact multireference subspace from the start,
thereby reducing the matrix dimension before thresholding, while preserving
the ground-state energy accuracy. Although the present examples use a subspace size
$M(N)$ that formally grows exponentially with $N$, the construction of compact multireference 
subspaces is an active area of research~\cite{giuliani2026precise}.

Although our numerical study here focuses on NOQE, the improved measurement-cost
scaling applies to any quantum subspace diagonalization method in which the
matrix elements are estimated from quantum circuits. By lowering the resource
bounds needed to estimate energies to a target accuracy, these results make
quantum computation more promising for strongly correlated quantum chemistry
problems, such as estimating spin-state energetics in transition-metal
clusters~\cite{reiher2017elucidating,vonburg2021quantum,shee2021revealing}.

\begin{acknowledgments}
The authors thank Thilo Scharnhorst, Oskar Leimkuhler, Hang Ren, Wendy Billings, Avijit Shee, Alex Krotz, Brandon Schramm, and Ayush Pancholy for discussions. 
M.K. thanks Ethan Epperly, Joerg Liesen, Chen Greif, Anne Greenbaum, and Nikhil Srivastava for their valuable feedback
during the Simons Institute ``Complexity and Linear Algebra'' program. 
This work was supported by the U.S. Department of
Energy, Office of Science, Office
of Advanced Scientific Computing Research under Award  DE-SC0025526.
\end{acknowledgments}

\section*{End Matter}

\subsection*{Proof of Lemma~\ref{lem:x-condition-number}}

Let $U$ be a unitary matrix that diagonalizes the ordinary Hermitian eigenproblem
\[
    U^\dagger \Smat^{-1/2}\Hmat\Smat^{-1/2}U
    =
    \operatorname{diag}(\lambda_1,\ldots,\lambda_M).
\]
For the Mathias-Li unit-column matrix, we write
\[
    X=\Smat^{-1/2}UD,
    \qquad
    D=\operatorname{diag}(\sqrt{\beta_1},\ldots,\sqrt{\beta_M}),
\]
where
\[
    \beta_j=\left(u_j^\dagger\Smat^{-1}u_j\right)^{-1},
    \qquad
    \alpha_j=\lambda_j\beta_j,
\]
and $u_j$ is the $j$th column of $U$. The $j$th column of $X$ is
$x_j=\sqrt{\beta_j}\Smat^{-1/2}u_j$, and the choice of $\beta_j$ gives
\[
    \|x_j\|^2
    =
    \beta_j u_j^\dagger\Smat^{-1}u_j
    =
    1,
\]
so the columns of $X$ have unit Euclidean norm. Moreover,
\[
    x_j^\dagger\Smat x_k
    =
    \sqrt{\beta_j\beta_k}\,u_j^\dagger u_k
    =
    \beta_j\delta_{jk},
\]
and
\[
    x_j^\dagger\Hmat x_k
    =
    \sqrt{\beta_j\beta_k}\,
    u_j^\dagger\Smat^{-1/2}\Hmat\Smat^{-1/2}u_k
    =
    \lambda_j\beta_j\delta_{jk}
    =
    \alpha_j\delta_{jk}.
\]
Therefore $X^\dagger(\Hmat+\mri\Smat)X
=\operatorname{diag}(\alpha_1+\mri\beta_1,\ldots,\alpha_M+\mri\beta_M)$, as
required. As $U$ is unitary,
\[
    \|X\|^2
    \leq
    \|\Smat^{-1/2}\|^2\|U\|^2\|D\|^2
    =
    \lambda_{\min}(\Smat)^{-1}\max_j\beta_j .
\]
It remains to bound $\beta_j$. From $x_j^\dagger\Smat x_j=\beta_j$ above and the
unit-column property $\|x_j\|=1$, we have $\beta_j\leq\lambda_{\max}(\Smat)$.
Combining the two estimates gives
\[
    \|X\|^2
    \leq
    \frac{\lambda_{\max}(\Smat)}{\lambda_{\min}(\Smat)}
    =
    \kappa(\Smat).
\]
This proves Lemma~\ref{lem:x-condition-number}.

\subsection*{Proof of Corollary~\ref{cor:scalable-thresholding-shot-cost}}

We first show that the validity condition
$\chi\kappa(\Smat_{>\vareps})<\vareps$ of
Corollary~\ref{cor:scalable-thresholding-shot-cost} implies that the
thresholded matrix pair $(\Hmat_{>\vareps},\Smat_{>\vareps})$ satisfies the
validity condition of Theorem~\ref{thm:mathias-li-bound}. The
finite-shot noise perturbs the retained pair by
\[
    \dHmat_{>\vareps}=V_{>\vareps}^\dagger\dHmat V_{>\vareps},
    \qquad
    \dSmat_{>\vareps}=V_{>\vareps}^\dagger\dSmat V_{>\vareps}.
\]
Since the columns of $V_{>\vareps}$ are orthonormal, $\|V_{>\vareps}\|=1$, so
$\|\dHmat_{>\vareps}\|\leq\|\dHmat\|$ and
$\|\dSmat_{>\vareps}\|\leq\|\dSmat\|$. The size of the retained
perturbation, defined as in Eq.~\eqref{eq:chi-def}, is therefore at most
$\chi$. Next, as shown in the proof of Lemma~\ref{lem:x-condition-number},
each $\beta_{j,>\vareps}$ is a Rayleigh quotient of $\Smat_{>\vareps}$ with
a unit vector, namely the corresponding column of $X_{>\vareps}$. As
every retained overlap eigenvalue exceeds the threshold,
\begin{equation}
    \min_j\beta_{j,>\vareps}
    \geq
    \lambda_{\min}(\Smat_{>\vareps})
    >
    \vareps.
    \label{eq:beta-lower-bound}
\end{equation}
Combining the assumed
$\chi\kappa(\Smat_{>\vareps})<\vareps$ with
Lemma~\ref{lem:x-condition-number} gives
\[
    \chi\|X_{>\vareps}\|^2
    \leq
    \chi\kappa(\Smat_{>\vareps})
    <
    \vareps
    <
    \min_j\beta_{j,>\vareps},
\]
which is the validity condition of Theorem~\ref{thm:mathias-li-bound}.

We next derive that the shot count in Eq.~\eqref{eq:shot-scaling} is
sufficient to reach the target eigenangle accuracy $\mathcal{E}$.
Let $\theta_{j,>\vareps}$ and $\widetilde{\theta}_{j,>\vareps}$ denote the
exact and perturbed eigenangles of the thresholded problem. Applying
Theorem~\ref{thm:mathias-li-bound} and
Lemma~\ref{lem:x-condition-number} to the retained pair yields
\[
    |\widetilde{\theta}_{j,>\vareps}-\theta_{j,>\vareps}|
    \leq
    \sin^{-1}\!\left(
        \frac{\chi\kappa(\Smat_{>\vareps})}{d_{\min,>\vareps}}
    \right).
\]
By Def.~\ref{def:scalable-thresholding},
$\kappa(\Smat_{>\vareps})=\mathcal{O}(1)$, so the eigenangle target is met
when $\chi/d_{\min,>\vareps}\leq\mathcal{E}$ up to constants. Substituting
Eq.~\eqref{eq:chi-rmt-scaling} gives the first equality of
Eq.~\eqref{eq:shot-scaling}.

It remains to show $1/d_{\min,>\vareps}=\mathcal{O}(1)$. The eigenvalues of
$\Smat_{>\vareps}$ are the retained eigenvalues of $\Smat$, which include
the largest one, so
$\lambda_{\max}(\Smat_{>\vareps})=\lambda_{\max}(\Smat)\geq 1$, where the
last inequality holds because the diagonal entries of $\Smat$ equal one.
Together with Eqs.~\eqref{eq:dmin-thresholded}
and~\eqref{eq:beta-lower-bound}, this gives
\[
    d_{\min,>\vareps}
    \geq
    \min_j\beta_{j,>\vareps}
    \geq
    \lambda_{\min}(\Smat_{>\vareps})
    \geq
    \frac{1}{\kappa(\Smat_{>\vareps})},
\]
where the final inequality follows from
$\lambda_{\max}(\Smat_{>\vareps})\geq 1$.
As $\kappa(\Smat_{>\vareps})=\mathcal{O}(1)$ by
Def.~\ref{def:scalable-thresholding}, this shows
$1/d_{\min,>\vareps}=\mathcal{O}(1)$ and hence the second equality of
Eq.~\eqref{eq:shot-scaling}.

Finally, suppose the scheme is $\eta$-scalable. Using the relations
${E_{j,>\vareps}=\cot\theta_{j,>\vareps}}$ and
${1/\sin\theta_{j,>\vareps}=\sqrt{1+E_{j,>\vareps}^2}}$, together with
their perturbed counterparts, the finite-shot eigenangle error of at most
$\mathcal{E}$ implies
\begin{align*}
    |\widetilde{E}_{j,>\vareps}-E_{j,>\vareps}|
    &=
    \frac{\sin|\widetilde{\theta}_{j,>\vareps}-\theta_{j,>\vareps}|}
         {\sin\theta_{j,>\vareps}\!\sin\widetilde{\theta}_{j,>\vareps}}
    \\
    &\leq
    \sqrt{(1+E_{j,>\vareps}^2)(1+\widetilde{E}_{j,>\vareps}^2)}\,
    \mathcal{E},
\end{align*}
where the equality follows from the cotangent difference identity
$\cot a-\cot b=\sin(b-a)/(\sin a\sin b)$ and the inequality uses
$\sin(x)\leq x$. Adding the thresholding bias
$|E_{j,>\vareps}-E_j|\leq\eta$ of Def.~\ref{def:scalable-thresholding} via
the triangle inequality gives Eq.~\eqref{eq:total-eigenvalue-error}. This
proves Corollary~\ref{cor:scalable-thresholding-shot-cost}.

We remark that the factor
$\sqrt{(1+E_{j,>\vareps}^2)(1+\widetilde{E}_{j,>\vareps}^2)}$ in
Eq.~\eqref{eq:total-eigenvalue-error} can be kept $\mathcal{O}(1)$ by
rescaling the generalized eigenvalue problem as
$(\Hmat,\Smat)\to(\Hmat/\alpha,\Smat)$, where
$\alpha\geq\|\hat{H}\|$ is a classically computable constant such as the
Pauli coefficient 1-norm $\sum_l|w_l|$. All generalized eigenvalues of the
rescaled pair satisfy $|E_j|\leq 1$, while $\Smat$, and therefore
$\kappa(\Smat_{>\vareps})$ and $\beta_{j,>\vareps}$, are unchanged. Thus,
the proof above applies verbatim with the rescaled perturbation size
$\chi=\sqrt{\|\dHmat\|^2/\alpha^2+\|\dSmat\|^2}$. The factor $1/\alpha$ is
absorbed into the constant suppressed in Eq.~\eqref{eq:chi-rmt-scaling},
which already depends on the Pauli decomposition of the Hamiltonian.

\bibliographystyle{apsrev4-2}
\bibliography{bib}

@article{tkachenko2025beyond,
  title={Beyond real: alternative unitary cluster Jastrow models for molecular electronic structure calculations on near-term quantum computers},
  author={Tkachenko, Nikolay V and Ren, Hang and Billings, Wendy M and Tomann, Rebecca and Whaley, K Birgitta and Head-Gordon, Martin},
  journal={Chemical Science},
  volume={16},
  number={47},
  pages={22299--22313},
  year={2025},
  publisher={The Royal Society of Chemistry}
}

@article{baek2023say,
  title={Say no to optimization: A nonorthogonal quantum eigensolver},
  author={Baek, Unpil and Hait, Diptarka and Shee, James and Leimkuhler, Oskar and Huggins, William J and Stetina, Torin F and Head-Gordon, Martin and Whaley, K Birgitta},
  journal={PRX Quantum},
  volume={4},
  number={3},
  pages={030307},
  year={2023},
  publisher={APS},
  doi={10.1103/PRXQuantum.4.030307}
}

@article{huggins2020nonorthogonal,
  title={A non-orthogonal variational quantum eigensolver},
  author={Huggins, William J. and Lee, Joonho and Baek, Unpil and O'Gorman, Bryan and Whaley, K. Birgitta},
  journal={New Journal of Physics},
  volume={22},
  number={7},
  pages={073009},
  year={2020},
  doi={10.1088/1367-2630/ab867b}
}

@article{huggins2021efficient,
  title={Efficient and noise resilient measurements for quantum chemistry on near-term quantum computers},
  author={Huggins, William J. and McClean, Jarrod R. and Rubin, Nicholas C. and Jiang, Zhang and Wiebe, Nathan and Whaley, K. Birgitta and Babbush, Ryan},
  journal={npj Quantum Information},
  volume={7},
  pages={23},
  year={2021},
  doi={10.1038/s41534-020-00341-7},
  eprint={1907.13117},
  archivePrefix={arXiv},
  primaryClass={quant-ph}
}

@article{verteletskyi2020measurement,
  title={Measurement optimization in the variational quantum eigensolver using a minimum clique cover},
  author={Verteletskyi, Vladyslav and Yen, Tzu-Ching and Izmaylov, Artur F},
  journal={The Journal of chemical physics},
  volume={152},
  number={12},
  year={2020},
  publisher={AIP Publishing},
  doi={10.1063/1.5141458}
}

@article{yen2023deterministic,
  title={Deterministic improvements of quantum measurements with grouping of compatible operators, non-local transformations, and covariance estimates},
  author={Yen, Tzu-Ching and Ganeshram, Aadithya and Izmaylov, Artur F.},
  journal={npj Quantum Information},
  volume={9},
  pages={14},
  year={2023},
  doi={10.1038/s41534-023-00683-y}
}

@article{shlosberg2023adaptive,
  title={Adaptive estimation of quantum observables},
  author={Shlosberg, Ariel and Jena, Andrew J and Mukhopadhyay, Priyanka and Haase, Jan F and Leditzky, Felix and Dellantonio, Luca},
  journal={Quantum},
  volume={7},
  pages={906},
  year={2023},
  publisher={Verein zur F{\"o}rderung des Open Access Publizierens in den Quantenwissenschaften},
  doi={10.22331/q-2023-01-26-906}
}

@article{grinko2021iterative,
  title={Iterative quantum amplitude estimation},
  author={Grinko, Dmitry and Gacon, Julien and Zoufal, Christa and Woerner, Stefan},
  journal={npj Quantum Information},
  volume={7},
  number={1},
  pages={52},
  year={2021},
  publisher={Nature Publishing Group UK London},
  doi={10.1038/s41534-021-00379-1}
}

@article{ren2026towards,
  title={Towards Heisenberg scaling: Measurement-efficient non-orthogonal quantum eigensolver},
  author={Ren, Hang and Zhang, Yipei and Scharnhorst, Thilo and Whaley, K. Birgitta},
  journal={arXiv preprint arXiv:2606.01589},
  year={2026},
  doi={10.48550/arXiv.2606.01589}
}

@article{romero2018strategies,
  title={Strategies for quantum computing molecular energies using the unitary coupled cluster ansatz},
  author={Romero, Jonathan and Babbush, Ryan and McClean, Jarrod R. and Hempel, Cornelius and Love, Peter J. and Aspuru-Guzik, Al{\'a}n},
  journal={Quantum Science and Technology},
  volume={4},
  number={1},
  pages={014008},
  year={2018},
  doi={10.1088/2058-9565/aad3e4}
}

@article{lee2019generalized,
  title={Generalized unitary coupled cluster wave functions for quantum computation},
  author={Lee, Joonho and Huggins, William J. and Head-Gordon, Martin and Whaley, K. Birgitta},
  journal={Journal of Chemical Theory and Computation},
  volume={15},
  number={1},
  pages={311--324},
  year={2019},
  doi={10.1021/acs.jctc.8b01004}
}

@article{matsuzawa2020jastrow,
  title={Jastrow-type decomposition in quantum chemistry for low-depth quantum circuits},
  author={Matsuzawa, Yuya and Kurashige, Yuki},
  journal={Journal of Chemical Theory and Computation},
  volume={16},
  number={2},
  pages={944--952},
  year={2020},
  doi={10.1021/acs.jctc.9b00963}
}

@article{mcardle2020quantum,
  title={Quantum computational chemistry},
  author={McArdle, Sam and Endo, Suguru and Aspuru-Guzik, Al{\'a}n and Benjamin, Simon C. and Yuan, Xiao},
  journal={Reviews of Modern Physics},
  volume={92},
  number={1},
  pages={015003},
  year={2020},
  doi={10.1103/RevModPhys.92.015003}
}

@article{hehre1969self,
  title={Self-consistent molecular-orbital methods. {I}. Use of {Gaussian} expansions of {Slater}-type atomic orbitals},
  author={Hehre, W. J. and Stewart, R. F. and Pople, J. A.},
  journal={The Journal of Chemical Physics},
  volume={51},
  number={6},
  pages={2657--2664},
  year={1969},
  doi={10.1063/1.1672392}
}

@article{moller1934note,
  title={Note on an Approximation Treatment for Many-Electron Systems},
  author={M{\o}ller, Chr. and Plesset, M. S.},
  journal={Physical Review},
  volume={46},
  number={7},
  pages={618--622},
  year={1934},
  doi={10.1103/PhysRev.46.618}
}

@article{thom2009hartree,
  title={Hartree-Fock solutions as a quasidiabatic basis for nonorthogonal configuration interaction},
  author={Thom, Alex J. W. and Head-Gordon, Martin},
  journal={The Journal of Chemical Physics},
  volume={131},
  number={12},
  pages={124113},
  year={2009},
  doi={10.1063/1.3236841}
}

@article{sundstrom2014nonorthogonal,
  title={Non-orthogonal configuration interaction for the calculation of multielectron excited states},
  author={Sundstrom, Eric J. and Head-Gordon, Martin},
  journal={The Journal of Chemical Physics},
  volume={140},
  number={11},
  pages={114103},
  year={2014},
  doi={10.1063/1.4868120}
}

@article{giuliani2026precise,
  title={Precise Quantum Chemistry calculations with few Slater Determinants},
  author={Giuliani, Clemens and Nys, Jannes and Martinazzo, Rocco and Carleo, Giuseppe and Rossi, Riccardo},
  journal={Nature Communications},
  volume={17},
  number={1},
  pages={5613},
  year={2026},
  doi={10.1038/s41467-026-70255-z}
}

@article{shee2021revealing,
  title={Revealing the nature of electron correlation in transition metal complexes with symmetry breaking and chemical intuition},
  author={Shee, Avijit and Loipersberger, Matthias and Hait, Diptarka and Lee, Joonho and Head-Gordon, Martin},
  journal={The Journal of Chemical Physics},
  volume={154},
  number={19},
  pages={194109},
  year={2021},
  doi={10.1063/5.0047386}
}

@article{ganoe2024notion,
  title={On the notion of strong correlation in electronic structure theory},
  author={Ganoe, Brad and Shee, James},
  journal={Faraday Discussions},
  volume={254},
  pages={53--75},
  year={2024},
  doi={10.1039/D4FD00066H}
}

@article{devriendt2026new,
  title={A New Angle on Quantum Subspace Diagonalization for Quantum Chemistry},
  author={De Vriendt, Xeno and Bringewatt, Jacob and Gjonbalaj, Nik O. and Ostermann, Stefan and Vodola, Davide and Borregaard, Johannes and K{\"u}hn, Michael and Yelin, Susanne F.},
  journal={arXiv preprint arXiv:2602.11985},
  year={2026},
  doi={10.48550/arXiv.2602.11985}
}

@article{kitaev1995quantum,
  title={Quantum measurements and the Abelian stabilizer problem},
  author={Kitaev, Alexei Yu.},
  journal={arXiv preprint quant-ph/9511026},
  year={1995},
  doi={10.48550/arXiv.quant-ph/9511026}
}

@article{abrams1999quantum,
  title={Quantum algorithm providing exponential speed increase for finding eigenvalues and eigenvectors},
  author={Abrams, Daniel S. and Lloyd, Seth},
  journal={Physical Review Letters},
  volume={83},
  number={24},
  pages={5162--5165},
  year={1999},
  doi={10.1103/PhysRevLett.83.5162}
}

@article{reiher2017elucidating,
  title={Elucidating reaction mechanisms on quantum computers},
  author={Reiher, Markus and Wiebe, Nathan and Svore, Krysta M. and Wecker, Dave and Troyer, Matthias},
  journal={Proceedings of the National Academy of Sciences},
  volume={114},
  number={29},
  pages={7555--7560},
  year={2017},
  doi={10.1073/pnas.1619152114}
}

@article{vonburg2021quantum,
  title={Quantum computing enhanced computational catalysis},
  author={von Burg, Vera and Low, Guang Hao and H{\"a}ner, Thomas and Steiger, Damian S. and Reiher, Markus and Roetteler, Martin and Troyer, Matthias},
  journal={Physical Review Research},
  volume={3},
  number={3},
  pages={033055},
  year={2021},
  doi={10.1103/PhysRevResearch.3.033055}
}

@article{lee2021even,
  title={Even More Efficient Quantum Computations of Chemistry Through Tensor Hypercontraction},
  author={Lee, Joonho and Berry, Dominic W. and Gidney, Craig and Huggins, William J. and McClean, Jarrod R. and Wiebe, Nathan and Babbush, Ryan},
  journal={PRX Quantum},
  volume={2},
  number={3},
  pages={030305},
  year={2021},
  doi={10.1103/PRXQuantum.2.030305}
}

@article{low2025fast,
  title={Fast Quantum Simulation of Electronic Structure by Spectral Amplification},
  author={Low, Guang Hao and King, Robbie and Berry, Dominic W and Han, Qiushi and DePrince III, A Eugene and White, Alec F and Babbush, Ryan and Somma, Rolando D and Rubin, Nicholas C},
  journal={Physical Review X},
  volume={15},
  number={4},
  pages={041016},
  year={2025},
  doi={10.1103/pb2g-j9cw}
}

@article{mcclean2017hybrid,
  title={Hybrid quantum-classical hierarchy for mitigation of decoherence and determination of excited states},
  author={McClean, Jarrod R. and Kimchi-Schwartz, Mollie E. and Carter, Jonathan and de Jong, Wibe A.},
  journal={Physical Review A},
  volume={95},
  number={4},
  pages={042308},
  year={2017},
  publisher={APS},
  doi={10.1103/PhysRevA.95.042308}
}

@article{colless2018computation,
  title={Computation of molecular spectra on a quantum processor with an error-resilient algorithm},
  author={Colless, J. I. and Ramasesh, V. V. and Dahlen, D. and Blok, M. S. and Kimchi-Schwartz, M. E. and McClean, J. R. and Carter, J. and de Jong, W. A. and Siddiqi, I.},
  journal={Physical Review X},
  volume={8},
  number={1},
  pages={011021},
  year={2018},
  publisher={APS},
  doi={10.1103/PhysRevX.8.011021}
}

@article{cortes2022fast,
  title={Fast-forwarding quantum simulation with real-time quantum {Krylov} subspace algorithms},
  author={Cortes, Cristian L and DePrince III, A Eugene and Gray, Stephen K},
  journal={Physical Review A},
  volume={106},
  number={4},
  pages={042409},
  year={2022},
  publisher={APS},
  doi={10.1103/PhysRevA.106.042409}
}

@article{stair2020multireference,
  title={A multireference quantum {Krylov} algorithm for strongly correlated electrons},
  author={Stair, Nicholas H. and Huang, Renke and Evangelista, Francesco A.},
  journal={Journal of Chemical Theory and Computation},
  volume={16},
  number={4},
  pages={2236--2245},
  year={2020},
  publisher={ACS Publications},
  doi={10.1021/acs.jctc.9b01125}
}

@article{shen2023real,
  title={Real-time {Krylov} theory for quantum computing algorithms},
  author={Shen, Yizhi and Klymko, Katherine and Sud, James and Williams-Young, David B. and de Jong, Wibe A. and Tubman, Norm M.},
  journal={Quantum},
  volume={7},
  pages={1066},
  year={2023},
  publisher={Verein zur F{\"o}rderung des Open Access Publizierens in den Quantenwissenschaften},
  doi={10.22331/q-2023-07-25-1066}
}

@article{seki2021quantum,
  title={Quantum Power Method by a Superposition of Time-Evolved States},
  author={Seki, Kazuhiro and Yunoki, Seiji},
  journal={PRX Quantum},
  volume={2},
  number={1},
  pages={010333},
  year={2021},
  doi={10.1103/PRXQuantum.2.010333}
}

@article{parrish2019quantum,
  title={Quantum Filter Diagonalization: Quantum Eigendecomposition without Full Quantum Phase Estimation},
  author={Parrish, Robert M. and McMahon, Peter L.},
  journal={arXiv preprint arXiv:1909.08925},
  year={2019},
  doi={10.48550/arXiv.1909.08925}
}

@article{bespalova2021hamiltonian,
  title={Hamiltonian Operator Approximation for Energy Measurement and Ground-State Preparation},
  author={Bespalova, Tatiana A. and Kyriienko, Oleksandr},
  journal={PRX Quantum},
  volume={2},
  number={3},
  pages={030318},
  year={2021},
  doi={10.1103/PRXQuantum.2.030318}
}

@article{patel2026quantum,
  title={Quantum seniority-based subspace expansion: Linear combinations of short-circuit unitary transformations for the electronic structure problem},
  author={Patel, Smik and Jayakumar, Praveen and Huang, Rick and Zeng, Tao and Izmaylov, Artur F.},
  journal={Journal of Chemical Theory and Computation},
  volume={22},
  number={8},
  pages={3937--3949},
  year={2026},
  doi={10.1021/acs.jctc.6c00017}
}

@article{leimkuhler2025quantum,
  title={A quantum eigenvalue solver based on tensor networks},
  author={Leimkuhler, Oskar and Whaley, K. Birgitta},
  journal={npj Quantum Information},
  volume={11},
  pages={184},
  year={2025},
  doi={10.1038/s41534-025-01128-4}
}

@article{leimkuhler2025exponential,
  title={Exponential quantum speedups for near-term molecular electronic structure methods},
  author={Leimkuhler, Oskar and Whaley, K. Birgitta},
  journal={arXiv preprint arXiv:2503.21041},
  year={2025},
  doi={10.48550/arXiv.2503.21041}
}

@incollection{vershynin2012nonasymptotic,
  title={Introduction to the non-asymptotic analysis of random matrices},
  author={Vershynin, Roman},
  booktitle={Compressed Sensing: Theory and Applications},
  pages={210--268},
  publisher={Cambridge University Press},
  year={2012}
}

@techreport{mathias2004definite,
  title={The definite generalized eigenvalue problem: A new perturbation theory},
  author={Mathias, Roy and Li, Chi-Kwong},
  type={T-NAREP No.},
  number={457},
  institution={Manchester Centre for Computational Mathematics},
  address={Manchester, UK},
  month=oct,
  note={{Oct. 2004}},
  year={2004}
}

@article{epperly2022theory,
  title={A theory of quantum subspace diagonalization},
  author={Epperly, Ethan N and Lin, Lin and Nakatsukasa, Yuji},
  journal={SIAM Journal on Matrix Analysis and Applications},
  volume={43},
  number={3},
  pages={1263--1290},
  year={2022},
  publisher={SIAM},
  doi={10.1137/21M145954X}
}

@article{lee2024sampling,
  title={Sampling error analysis in quantum krylov subspace diagonalization},
  author={Lee, Gwonhak and Lee, Dongkeun and Huh, Joonsuk},
  journal={Quantum},
  volume={8},
  pages={1477},
  year={2024},
  publisher={Verein zur F{\"o}rderung des Open Access Publizierens in den Quantenwissenschaften},
  doi={10.22331/q-2024-09-19-1477}
}

@article{kirby2024analysis,
  title={Analysis of quantum Krylov algorithms with errors},
  author={Kirby, William},
  journal={Quantum},
  volume={8},
  pages={1457},
  year={2024},
  publisher={Verein zur F{\"o}rderung des Open Access Publizierens in den Quantenwissenschaften},
  doi={10.22331/q-2024-08-29-1457}
}

@article{ren2026error,
  title={Error-mitigated nonorthogonal quantum eigensolver via shadow tomography},
  author={Ren, Hang and Zhang, Yipei and Billings, Wendy M and Tomann, Rebecca and Tkachenko, Nikolay V and Kang, Mingyu and Head-Gordon, Martin and Whaley, K Birgitta},
  journal={Physical Review Research},
  volume={8},
  number={1},
  pages={013268},
  year={2026},
  publisher={APS},
  doi={10.1103/7c5b-3v56}
}

@article{sun2020recent,
  title={Recent developments in the PySCF program package},
  author={Sun, Qiming and Zhang, Xing and Banerjee, Samragni and Bao, Peng and Barbry, Marc and Blunt, Nick S and Bogdanov, Nikolay A and Booth, George H and Chen, Jia and Cui, Zhi-Hao and others},
  journal={The Journal of chemical physics},
  volume={153},
  number={2},
  pages={024109},
  year={2020},
  publisher={AIP Publishing},
  doi={10.1063/5.0006074}
}

@article{mcclean2020openfermion,
  title={OpenFermion: the electronic structure package for quantum computers},
  author={McClean, Jarrod R and Rubin, Nicholas C and Sung, Kevin J and Kivlichan, Ian D and Bonet-Monroig, Xavier and Cao, Yudong and Dai, Chengyu and Fried, E Schuyler and Gidney, Craig and Gimby, Brendan and others},
  journal={Quantum Science \& Technology},
  volume={5},
  number={3},
  pages={034014},
  year={2020},
  publisher={IOP Publishing},
  doi={10.1088/2058-9565/ab8ebc}
}

@misc{noqe_repo,
  author={Kang, Mingyu},
  title={{Code for ``Improved Measurement Cost Scaling in the Nonorthogonal Quantum Eigensolver''}},
  year={2026},
  howpublished={\url{https://github.com/mkangquantum/noqe.git}},
  note={GitHub repository}
}

\end{document}